\let\counterwithin\relax
\newtheorem{theorem}{Theorem}
\newtheorem{assumption}{Assumption}
\begin{document}

\title{Consistency Analysis of the Closed-loop SRIVC Estimator}%
\author{Siqi Pan, James S. Welsh, Rodrigo A. Gonz\'alez and Cristian R. Rojas}%
\date{}
\maketitle

\section{Problem Formulation}

Consider an asymptotically stable, linear, continuous-time closed-loop system
\begin{equation}
\mathcal{S}\colon \begin{cases}
               y(t) &= G^*(p)u(t) + v(t)	\\
	     u(t) &= r(t) - C(p)y(t),
            \end{cases}
\end{equation}
where $G^*(p)=B^*(p)/A^*(p)$ denotes the continuous-time system, $C(p)$ denotes the continuous-time controller, 
$u(t)$ is the plant input and $y(t)$ is the plant output. The external reference signal, $r(t)$, is given by
$r(t) = r_1(t) + C(p)r_2(t)$. For simplicity of the analysis, we assume that the additive noise on the output is discrete-time 
in nature and will have a ZOH or FOH intersample behaviour when entering the closed-loop system a
seen in Figure~\ref{fig:closed-loop1}.

\begin{figure} [h]
\begin{center}
\includegraphics[width = 14cm]{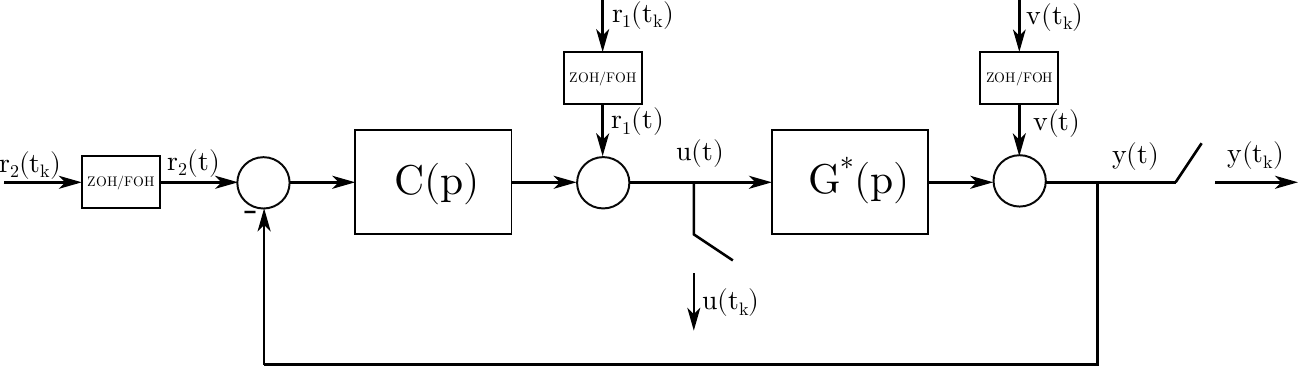}
\caption{Closed-loop configuration in the existing literature.}
\label{fig:closed-loop1}
\end{center}
\end{figure}

The continuous-time system identification problem is to find an estimate of $G^*(p)$ from sampled data 
$\{r(t_k)\}_{k=1}^N$, $\{u(t_k)\}_{k=1}^N$ and $\{y(t_k)\}_{k=1}^N$. The plant input and output 
after they get sampled can be expressed through sensitivity functions as
\begin{align}
	u(t_k) &= \frac{1}{1+G^*(p)C(p)}r_1(t_k) 
		+ \frac{C(p)}{1+G^*(p)C(p)}r_2(t_k) - \frac{C(p)}{1+G^*(p)C(p)}v(t_k)	\label{eq:plant_input}\\
	y(t_k) &= \frac{G^*(p)}{1+G^*(p)C(p)}r_1(t_k)+ \frac{G^*(p)C(p)}{1+G^*(p)C(p)}r_2(t_k) 
	+ \frac{1}{1+G^*(p)C(p)}v(t_k). \label{eq:plant_output}
\end{align}
Note that, according to the closed-loop configuration in Figure~\ref{fig:closed-loop1}, each sensitivity functions 
in~\eqref{eq:plant_input} and~\eqref{eq:plant_output} need to be discretised according 
to the intersample behaviours of the reference signals and the noise. 
Also note that the external reference will either be $r_1(t_k)$ or $r_2(t_k)$ but not both simultaneously.

The sampled versions of the noise-free plant input and output are expressed as
\begin{align}
	u^*(t_k) &= \frac{1}{1+C(p)G^*(p)}r_1(t_k) 
	+ \frac{C(p)}{1+C(p)G^*(p)}r_2(t_k) \label{eq:plant_input_free}\\
	x^*(t_k) &= \frac{G^*(p)}{1+C(p)G^*(p)}r_1(t_k)+ \frac{G^*(p)C(p)}{1+C(p)G^*(p)}r_2(t_k).
	\label{eq:plant_output_free}
\end{align}
An approximate version of these noise-free signals will be used to construct the instrument vectors in the 
CLSRIVC estimator since these signals are not available for measurements.

Assume that the controller $C(p)$ is known, then the $(j+1)$-th iteration of the CLSRIVC estimator is given by
\begin{equation}	\label{eq:clsrivc}
	\boldsymbol\theta_{j+1} = \left[\frac{1}{N}\sum_{k=1}^N \hat{\boldsymbol\varphi}_f(t_k,\boldsymbol\theta_j)	
	\boldsymbol\varphi^\top_f(t_k,\boldsymbol\theta_j)\right]^{-1}	
	\left[\frac{1}{N}\sum_{k=1}^N \hat{\boldsymbol\varphi}_f(t_k,\boldsymbol\theta_j) y_f(t_k,\boldsymbol\theta_j)\right],	
\end{equation}
where the filtered regressor vector and the filtered output are given by
\begin{equation}	\label{eq:cl_reg}
	\boldsymbol\varphi_f(t_k,\boldsymbol\theta_j) = \left[\begin{array}{cccccc}
	-\dfrac{p^{n}}{A_j(p)}y(t_k) & \dots & -\dfrac{p}{A_j(p)}y(t_k) & \dfrac{p^m}{A_j(p)}u(t_k) 
	& \dots & \dfrac{1}{A_j(p)}u(t_k)
	\end{array}\right]^\top,
\end{equation}
and
\begin{equation*}
	y_f(t_k,\boldsymbol\theta_j) = \frac{1}{A_j(p)}y(t_k)
\end{equation*}
respectively. The filtered instrument vector needs to be chosen such that it is not correlated with the additive 
noise on the output. Hence, directly constructing the filtered instrument vector using the plant input $u(t_k)$ as in the 
open-loop case is not a viable option in the closed-loop case. Instead, the noise-free plant input and 
output given in~\eqref{eq:plant_input_free} and~\eqref{eq:plant_output_free} 
respectively can be approximated using the sensitivity functions constructed using the previous estimate 
of the plant, and the filtered instrument vector is then constructed using the filtered derivatives of the 
approximated input and output signals, i.e.
\begin{equation}	\label{eq:cl_ins}
	\hat{\boldsymbol\varphi}_f(t_k,\boldsymbol\theta_j) = \left[\begin{array}{cccccc}
	-\hat{x}_f^{(n)}(t_k) & \dots & -\hat{x}^{(1)}_f(t_k) & \hat{u}_f^{(m)}(t_k) & \dots & \hat{u}_f(t_k)
	\end{array}\right]^\top,
\end{equation}
where
\begin{align*}
	\hat{x}_f^{(i)} &= \frac{p^i G_j(p)}{A_j(p)(1+C(p)G_j(p))} r_1(t_k)	
	+ \frac{p^i G_j(p)C(p)}{A_j(p)(1+C(p)G_j(p))} r_2(t_k)\\
	\hat{u}_f^{(i)} &= \frac{p^i }{A_j(p)(1+C(p)G_j(p))} r_1(t_k)
	+ \frac{p^i C(p)}{A_j(p)(1+C(p)G_j(p))} r_2(t_k).
\end{align*}

Let the plant, the $j$-th iteration of the model and controller be parameterised as
\begin{equation}	\label{eq:plant}
	G^*(p) = \frac{B^*(p)}{A^*(p)} = \frac{b^*_0 p^{m^*} + \cdots +b^*_{m^*}}{a^*_1p^{n^*} + \cdots + a^*_{n^*}p + 1}
\end{equation}
with $n^*\geq m^*$, 
\begin{equation*}
	G_j(p) = \frac{B_j(p)}{A_j(p)} = \frac{b_0 p^{m} + \cdots +b_{m^*}}{a_1p^{n} + \cdots + a_{n}p + 1}
\end{equation*}
with $n \geq m$, and
\begin{equation}	\label{eq:controller}
	C(p) = \frac{F(p)}{L(p)}
\end{equation}
with numerator order $n_f$ and denominator order $n_l$.


\section{Theoretical Analysis}

\begin{assumption}	\label{A1}
The system, $\frac{B^*(p)}{A^*(p)}$, and controller, $\frac{F(p)}{L(p)}$, are asymptotically stable 
with $B^*(p)$ and $A^*(p)$ being coprime and $n^* \geq m^*$. 
The closed-loop system is also asymptotically stable, i.e. the zeros of $A^*(p)L(p)+B^*(p)F(p)$ 
have strictly negative real parts.
\end{assumption}

\begin{assumption}	\label{A2}
The external reference, $r(t_k)$, and disturbance, $v(t_s)$, are stationary and mutually independent 
for all $k$ and $s$.
\end{assumption}

\begin{assumption}	\label{A3}
The external reference, $r(t_k)$, is persistently exciting of order no less than 
$n+\max(n+n_l,m+n_f)+1$.
\end{assumption}

\begin{assumption}	\label{A4}
The zeros of $A_j(p)$ have strictly negative real parts, $n \geq m$, with $A_j(p)$ and $B_j(p)$ 
being coprime. The model of the closed-loop system is also asymptotically stable, i.e. the zeros 
of $A_j(p)L(p)+B_j(p)F(p)$ have strictly negative real parts.
\end{assumption}

\begin{assumption}	\label{A5}
The model order is known exactly, i.e. $n=n^*$ and $m=m^*$.
\end{assumption}

\begin{assumption}	\label{A6}
The sampling frequency is larger than twice of the largest imaginary part of the zeros of 
$A_j(p)(A^*(p)L(p)+B^*(p)F(p))$.
\end{assumption}

Assume that $r_1(t_k)$ is the external reference signal and $r_2(t_k)=0$ (note that the proof should be 
very similar for a non-zero $r_2(t_k)$ with some differences in the sensitivity functions). 
Then, the filtered regressor vector~\eqref{eq:cl_reg} can be written as
\begin{align*}
	\boldsymbol\varphi_f(t_k,\boldsymbol\theta_j) &= \Bigg[\begin{array}{ccc}
	-\dfrac{p^{n}}{A_j(p)}\left\{\dfrac{G^*(p)}{1+G^*(p)C(p)} r(t)\right\}_{t=t_k}
	& \dots
	& -\dfrac{p}{A_j(p)}\left\{\dfrac{G^*(p)}{1+G^*(p)C(p)} r(t)\right\}_{t=t_k}
	\end{array}\\
	&\begin{array}{ccc}
	 \dfrac{p^{m}}{A_j(p)}\left\{\dfrac{1}{1+G^*(p)C(p)} r(t)\right\}_{t=t_k}
	& \dots
	& \dfrac{1}{A_j(p)}\left\{\dfrac{1}{1+G^*(p)C(p)} r(t)\right\}_{t=t_k}
	\end{array}\Bigg]^\top
	+ \mathbf{v}_f(t_k),
\end{align*}
where $\mathbf{v}_f(t_k)$ is a vector containing the filtered versions of the additive noise, $v(t_k)$.

The filtered regressor vector can also be written as
\begin{equation}	\label{eq:phi_3parts}
	\boldsymbol\varphi_f(t_k,\boldsymbol\theta_j) = 
	\tilde{\boldsymbol\varphi}_f(t_k,\boldsymbol\theta_j) + \boldsymbol\Delta(t_k,\boldsymbol\theta_j)
	+ \mathbf{v}_f(t_k,\boldsymbol\theta_j),
\end{equation}
where
\begin{align}		\label{eq:phi_tilde}
	\tilde{\boldsymbol\varphi}_f(t_k,\boldsymbol\theta_j) &= \Big[\begin{array}{ccc}
	-\dfrac{p^{n}G^*(p)}{A_j(p)(1+G^*(p)C(p))} r(t_k)
	& \dots
	& -\dfrac{pG^*(p)}{A_j(p)(1+G^*(p)C(p))} r(t_k)
	\end{array} \notag\\
	&\hspace{2cm}\begin{array}{ccc}
	 \dfrac{p^{m}}{A_j(p)(1+G^*(p)C(p))} r(t_k)
	& \dots
	&\dfrac{1}{A_j(p)(1+G^*(p)C(p))} r(t_k)
	\end{array}\Big]^\top.
\end{align}

Now, substitute~\eqref{eq:plant} and~\eqref{eq:controller} into~\eqref{eq:phi_tilde}, and let
\begin{equation*}
	Q^*(p) := A^*(p)L(p) + B^*(p)F(p),
\end{equation*}
we can then express~\eqref{eq:phi_tilde} as
\begin{align}	\label{eq:phi_tilde2}
	\tilde{\boldsymbol\varphi}_f(t_k,\boldsymbol\theta_j) &= \left[\hspace{-0.1cm}\begin{array}{cccccc}
	-\dfrac{p^{n}B^*(p)L(p)}{A_j(p)Q^*(p)} r(t_k)\hspace{-0.2cm}
	& \dots \hspace{-0.2cm}
	& -\dfrac{pB^*(p)L(p)}{A_j(p)Q^*(p)} r(t_k)
	 &\dfrac{p^{m}A^*(p)L(p)}{A_j(p)Q^*(p)} r(t_k) \hspace{-0.2cm}
	& \dots \hspace{-0.2cm}
	&\dfrac{A^*(p)L(p)}{A_j(p)Q^*(p)} r(t_k)
	\end{array}\hspace{-0.1cm}\right]^\top\notag \\
	&= \mathbf{S}(-B^*,A^*) \frac{L(p)}{A_j(p)Q^*(p)}\mathbf{r}_d(t_k),
\end{align}
where $\mathbf{S}(-B^*,A^*)$ is a $(n+m+1)\times(n+m+1)$ Sylvester matrix constructed using the coefficients 
of $B^*(p)$ and $A^*(p)$. This matrix is non-singular under Assumption~\ref{A1}.
The vector $\mathbf{r}_d(t_k)$ is given by
\begin{equation}	\label{eq:rd}
	\mathbf{r}_d(t_k) = \left[\begin{array}{ccc}
	r^{(n+m)}(t_k) & \dots & r(t_k)
	\end{array}\right]^\top.
\end{equation}

The vector and $\boldsymbol\Delta(t_k,\boldsymbol\theta_j)$ contains the interpolation error of the filtered derivatives 
of the plant output, i.e.
\begin{equation}	\label{eq:Delta}
\boldsymbol\Delta_i(t_k,\boldsymbol\theta_j) = \begin{cases}
\dfrac{p^{n+1-i}B^*(p)L(p)}{A_j(p)Q^*(p)} r(t_k) 
- \dfrac{p^{n+1-i}}{A_j(p)}\left\{\dfrac{B^*(p)L(p)}{Q^*(p)}r(t)\right\}_{t=t_k}\; &\text{for} \;i=1,\dots, n,\\
0,\;&\text{otherwise}.
\end{cases}
\end{equation}


Similarly, the filtered instrument vector can be written as
\begin{equation}	\label{eq:phi_hat}
	\hat{\boldsymbol\varphi}_f(t_k,\boldsymbol\theta_j) = 
	\mathbf{S}(-B_j,A_j)\frac{L(p)}{A_j(p)Q_j(p)} \mathbf{r}_d(t_k)
\end{equation}
where $\mathbf{r}_d(t_k)$ is given in~\eqref{eq:rd} and
\begin{equation*}
	Q_j(p) := A_j(p)L(p)+B_j(p)F(p).
\end{equation*}

Now, as $N\rightarrow\infty$, the modified normal matrix of the CLSRIVC estimator can be expressed as
\begin{equation}	\label{eq:final_normal}
	\mathbb{E}\left\{\hat{\boldsymbol\varphi}_f(t_k,\boldsymbol\theta_j)
	\boldsymbol\varphi^\top_f(t_k,\boldsymbol\theta_j)\right\}
	= \underbrace{\mathbf{S}(-B_j,A_j)\boldsymbol\Phi\mathbf{S}^\top(-B^*,A^*)}_
	{\mathbb{E}\left\{\hat{\boldsymbol\varphi}_f(t_k,\boldsymbol\theta_j)
	\tilde{\boldsymbol\varphi}^\top_f(t_k,\boldsymbol\theta_j)\right\}}
	+ \mathbb{E}\left\{\hat{\boldsymbol\varphi}_f(t_k,\boldsymbol\theta_j) \boldsymbol\Delta^\top(t_k)\right\} 
	+  \mathbb{E}\left\{\hat{\boldsymbol\varphi}_f(t_k,\boldsymbol\theta_j) \mathbf{v}_f^\top(t_k)\right\}
\end{equation}
where
\begin{equation}	\label{eq:Phi}
	\boldsymbol\Phi := \mathbb{E}\left\{\frac{L(p)}{A_j(p)Q_j(p)} {\mathbf{r}_d}(t_k)
	\frac{L(p)}{A_j(p)Q^*(p)} {\mathbf{r}_d}^\top(t_k)\right\}.
\end{equation}

Next, we examine the consistency of the CLSRIVC estimator in Theorem~\ref{thm:clsrivc_consistency}.

\begin{theorem}\label{thm:clsrivc_consistency}
Consider the CLSRIVC estimator~\eqref{eq:clsrivc}. Suppose Assumptions~\ref{A1} --~\ref{A6} hold and 
the condition
\begin{equation*}
\left\|\mathbb{E}\left\{\hat{\boldsymbol{\varphi}}_f(t_k,\boldsymbol\theta_j)\boldsymbol{\Delta}^\top(t_k,\boldsymbol\theta_j)\right\}\right\|_2
< \sigma_{n+m+1}
\left( \mathbb{E}\left\{\hat{\boldsymbol{\varphi}}_f(t_k,\boldsymbol\theta_j)\tilde{\boldsymbol{\varphi}}_f^\top(t_k,\boldsymbol\theta_j)\right\}\right)
\end{equation*}
is satisfied, where $\tilde{\boldsymbol{\varphi}}_f(t_k,\boldsymbol\theta_j)$ and 
$\boldsymbol{\Delta}(t_k,\boldsymbol\theta_j)$ are given 
by~\eqref{eq:phi_tilde2} and~\eqref{eq:Delta}, respectively. 
Then for a ZOH or FOH external reference and a continuous-time controller, $C(p)$, 
the CLSRIVC estimator is generically inconsistent 
when only sampled data are available as measurements.
\end{theorem}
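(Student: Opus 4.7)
The plan is to fix the iteration at the only potential consistency point $\boldsymbol\theta_j=\boldsymbol\theta^*$ and show that $\boldsymbol\theta_{j+1}\neq\boldsymbol\theta^*$ for generic data-generating parameters. First, observe that $\hat{\boldsymbol\varphi}_f(t_k,\boldsymbol\theta_j)$ in \eqref{eq:phi_hat} is a function of the reference alone (through the sensitivities of $G_j$ and $C$), so by the independence of $r$ and $v$ in Assumption~\ref{A2} the two noise contributions $\mathbb{E}\{\hat{\boldsymbol\varphi}_f\mathbf{v}_f^\top\}$ in \eqref{eq:final_normal} and its counterpart in the right-hand side $\mathbb{E}\{\hat{\boldsymbol\varphi}_f y_f\}$ both vanish. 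At $\boldsymbol\theta_j=\boldsymbol\theta^*$ the factors in \eqref{eq:phi_tilde2} and \eqref{eq:phi_hat} coincide, so the first summand of \eqref{eq:final_normal} becomes $\mathbf{M}_{\tilde\varphi}:=\mathbf{S}(-B^*,A^*)\boldsymbol\Phi\mathbf{S}^\top(-B^*,A^*)$ with $Q_j=Q^*$ in \eqref{eq:Phi}. I would show $\mathbf{M}_{\tilde\varphi}$ is non-singular using Assumptions~\ref{A1} and~\ref{A5} (coprimeness, hence invertible Sylvester factors) together with Assumptions~\ref{A3} and~\ref{A6} (persistence of excitation at a sampling frequency large enough that sampling-induced pole/zero coincidences cannot wipe out the spectral lines of $\mathbf{r}_d$).

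The next step is the continuous-time identity that would make the estimator consistent if intersample effects could be ignored. The ARX relation $A^*(p)x^*(t)=B^*(p)u^*(t)$, applied to the reference-driven noise-free signals, directly gives
\[
\tilde{\boldsymbol\varphi}_f(t,\boldsymbol\theta^*)^\top\boldsymbol\theta^* \;=\; \frac{B^*(p)L(p)}{A^*(p)Q^*(p)}\,r(t)
\]
as an equality of continuous-time signals; sampling at $t=t_k$ produces the ideal filtered output $\tilde y_f(t_k)=\{B^*L/(A^*Q^*)\,r(t)\}_{t=t_k}$, and taking expectations against $\hat{\boldsymbol\varphi}_f$ yields $\mathbb{E}\{\hat{\boldsymbol\varphi}_f\tilde y_f\}=\mathbf{M}_{\tilde\varphi}\boldsymbol\theta^*$.

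Now I would decompose the actual filtered output as $y_f(t_k,\boldsymbol\theta^*)=\tilde y_f(t_k)+\delta_y(t_k)+v_{y,f}(t_k)$, in analogy with \eqref{eq:phi_3parts}, where $\delta_y$ is the scalar interpolation error playing for $y_f$ the role that $\boldsymbol\Delta$ in \eqref{eq:Delta} plays for $\boldsymbol\varphi_f$. Substituting this decomposition together with \eqref{eq:phi_3parts} into the CLSRIVC update \eqref{eq:clsrivc} at $\boldsymbol\theta_j=\boldsymbol\theta^*$, using the noise-term vanishings from the first paragraph and the identity $\mathbb{E}\{\hat{\boldsymbol\varphi}_f\tilde y_f\}=\mathbf{M}_{\tilde\varphi}\boldsymbol\theta^*$ from the second, the ideal contributions cancel and one arrives at
\[
\boldsymbol\theta_{j+1}-\boldsymbol\theta^* \;=\; \bigl(\mathbf{M}_{\tilde\varphi}+\mathbf{M}_\Delta\bigr)^{-1}\Bigl(\mathbb{E}\{\hat{\boldsymbol\varphi}_f\,\delta_y\}-\mathbf{M}_\Delta\,\boldsymbol\theta^*\Bigr),
\]
with $\mathbf{M}_\Delta:=\mathbb{E}\{\hat{\boldsymbol\varphi}_f\boldsymbol\Delta^\top\}$. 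The theorem's hypothesis is exactly the Weyl-type singular-value inequality that makes $\mathbf{M}_{\tilde\varphi}+\mathbf{M}_\Delta$ invertible, so the right-hand side is well defined.

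The remaining, and hardest, step is to show that the bracketed vector does not vanish outside a measure-zero set of parameters, i.e.\ the genuine \emph{generic} nature of the inconsistency. When the reference has ZOH or FOH intersample behaviour and the closed loop is continuous-time, the sampled signal $\{B^*L/Q^*\,r(t)\}_{t=t_k}$ has dynamics that cannot be reproduced by any rational filter in the shift operator acting on $r(t_k)$, so both $\boldsymbol\Delta$ and $\delta_y$ are non-trivial zero-mean processes generically. I would pass to the frequency domain via the standard Poisson-summation representation of ZOH/FOH hold equivalents and read off that the identity $\mathbb{E}\{\hat{\boldsymbol\varphi}_f\delta_y\}=\mathbf{M}_\Delta\boldsymbol\theta^*$ translates into a non-trivial polynomial constraint in the coefficients of $B^*,A^*,F,L$ and in the reference spectrum. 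Exhibiting even a single low-order instance (say $n=m=1$ with a first-order $C$ and a ZOH step-spectrum reference) in which the constraint fails is enough to prove it is not identically satisfied, after which genericity follows from the elementary fact that the zero set of a non-trivial polynomial is thin.
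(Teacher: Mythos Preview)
Your argument is correct and lands on the same mechanism as the paper, but the organization differs. The paper works at an \emph{arbitrary} converging point $\bar{\boldsymbol\theta}$: it rewrites the fixed-point residual as
\[
\varepsilon(t_k,\bar{\boldsymbol\theta})=\frac{L(p)}{\bar A(p)Q^*(p)}\mathbf{r}_d^\top(t_k)\,\mathbf{h}+\varepsilon_u(t_k),
\qquad
\varepsilon_u(t_k)=\Bigl\{\tfrac{\bar B}{\bar A}\,\tfrac{1}{1+G^*C}r\Bigr\}_{t_k}
-\tfrac{\bar B}{\bar A}\Bigl\{\tfrac{1}{1+G^*C}r\Bigr\}_{t_k},
\]
and concludes $\mathbf{h}=-\bar{\boldsymbol\Phi}^{-1}\bar{\boldsymbol\Psi}\neq\mathbf 0$ because the interpolation term $\varepsilon_u$ is generically nonzero. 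You instead specialize to $\boldsymbol\theta_j=\boldsymbol\theta^*$ and compute $\boldsymbol\theta_{j+1}-\boldsymbol\theta^*$ by decomposing both $\boldsymbol\varphi_f$ and $y_f$. A short calculation shows your bracketed quantity $\mathbb{E}\{\hat{\boldsymbol\varphi}_f\,\delta_y\}-\mathbf{M}_\Delta\boldsymbol\theta^*=\mathbb{E}\{\hat{\boldsymbol\varphi}_f(\delta_y-\boldsymbol\Delta^\top\boldsymbol\theta^*)\}$ collapses exactly to $\mathbb{E}\{\hat{\boldsymbol\varphi}_f\,\varepsilon_u\}$ evaluated at $\boldsymbol\theta^*$, so the two routes meet at the same obstruction. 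The paper's formulation buys you an explicit bias characterization (the formula for $\mathbf{h}$ and hence for $\bar B/\bar A-B^*/A^*$), while yours is a little more direct for the bare inconsistency claim. One caution: when you carry out the cancellation you must include the $u$-block interpolation errors in $\boldsymbol\Delta$ (the paper's display \eqref{eq:Delta} writes them as zero, but a nonzero $u$-contribution is what survives and produces $\varepsilon_u$); if you literally take $\boldsymbol\Delta_i=0$ for $i>n$ your bracket will spuriously vanish. Your proposed Poisson-summation plus explicit-counterexample argument for generic non-vanishing is more explicit than the paper's one-line ``$\varepsilon_u\neq0$ in general'' and is a welcome strengthening.
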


\begin{proof}

Consider the modified normal matrix in~\eqref{eq:final_normal}. It can be shown that 
\begin{equation*}
\mathbb{E}\left\{\hat{\boldsymbol\varphi}_f(t_k,\boldsymbol\theta_j) \mathbf{v}_f^\top(t_k)\right\} = \mathbf{0}
\end{equation*}
under Assumption~\ref{A2} by following the same procedure as the proof of Statement~1 of Theorem~1 in~\cite{Pan2020}.
Now,~\eqref{eq:final_normal} simplifies to
\begin{equation*}
	\mathbb{E}\left\{\hat{\boldsymbol\varphi}_f(t_k,\boldsymbol\theta_j)
	\boldsymbol\varphi^\top_f(t_k,\boldsymbol\theta_j)\right\}
	= \mathbf{S}(-B_j,A_j)\boldsymbol\Phi\mathbf{S}^\top(-B^*,A^*)
	+ \mathbb{E}\left\{\hat{\boldsymbol\varphi}_f(t_k,\boldsymbol\theta_j) \boldsymbol\Delta^\top(t_k)\right\}.
\end{equation*}

Let the set of parameters that describe $A_j(p)$ and $B_j(p)$ be defined as
$$\Omega  = \{a_1, \dots, a_n, b_0, \dots, b_m \colon A_j(p) \; \text{is a stable polynomial}\}.$$

Define $\boldsymbol\Phi^*$ as the matrix with $A_j(p)=A^*$ and $B_j(p)=B^*(p)$, i.e.
\begin{equation*}
	\boldsymbol\Phi^* := \mathbb{E}\left\{\frac{L(p)}{A^*(p)Q^*(p)} \mathbf{r}_d(t_k)
	\frac{L(p)}{A^*(p)Q^*(p)} \mathbf{r}_d^\top(t_k)\right\}.
\end{equation*}

The polynomial, $A^*(p)Q^*(p)$, has a degree of $n+\max(n+n_l,m+n_f)$. Then, by following the same procedure as the 
proof of Lemma~7 in~\cite{Pan2020}, $\boldsymbol\Phi^*$ can be shown to be positive definite for an input with a 
persistent excitation of $n+\max(n+n_l,m+n_f)+1$. 

Next, an arbitrary entry of $\boldsymbol\Phi$ in~\eqref{eq:Phi} can be written as
\begin{align*}
	\boldsymbol\Phi_{dl} &= \mathbb{E}\left\{\frac{p^{n+m+1-d} L(p)}{A_j(p)Q_j(p)}r(t_k)
	\frac{p^{n+m+1-l} L(p)}{A_j(p)Q^*(p)}r(t_k)\right\}\\
	&= \frac{1}{2\pi}\int_{-\pi}^{\pi}\frac{C_1(e^{i\omega})}{\tilde{A}_j(e^{i\omega})\tilde{Q}_j(e^{i\omega})}
	\frac{C_2(e^{-i\omega})}{\tilde{A}_j(e^{-i\omega})\tilde{Q}^*(e^{-i\omega})} dF_{r}(\omega).
\end{align*}
By following following the same procedure as the proof of Lemma~9 in~\cite{Pan2020}, it can be shown that 
the entries of $\boldsymbol\Phi$ are analytic functions of every parameter in $\Omega$ due to the fact that 
compositions of analytic functions are analytic.

The non-singularity of the modified normal matrix of the CLSRIVC estimator guarantees the existence of a solution 
to the CLSRIVC esitmator. Next, let's examine the converging point of the CLSRIVC estimator.

Provided that the CLSRIVC estimator~\eqref{eq:clsrivc} converges, the converging point must satisfy
\begin{equation}	\label{eq:converging_point}
	\bar{\boldsymbol\theta} = \left[\frac{1}{N}\sum_{k=1}^N \hat{\boldsymbol\varphi}_f(t_k,\bar{\boldsymbol\theta})	
	\boldsymbol\varphi^\top_f(t_k,\bar{\boldsymbol\theta})\right]^{-1}	
	\left[\frac{1}{N}\sum_{k=1}^N \hat{\boldsymbol\varphi}_f(t_k,\bar{\boldsymbol\theta}) y_f(t_k,\bar{\boldsymbol\theta})\right].	
\end{equation}

As $N\rightarrow\infty$,~\eqref{eq:converging_point} can be written as
\begin{equation*}
	\mathbb{E}\left\{\hat{\boldsymbol\varphi}_f(t_k,\bar{\boldsymbol\theta})	
	\boldsymbol\varphi^\top_f(t_k,\bar{\boldsymbol\theta})\right\}^{-1}	
	\mathbb{E}\left\{\hat{\boldsymbol\varphi}_f(t_k,\bar{\boldsymbol\theta}) 
	\left(y_f(t_k,\bar{\boldsymbol\theta}) - \boldsymbol\varphi^\top_f(t_k,\bar{\boldsymbol\theta})\bar{\boldsymbol\theta}\right)\right\} 
	= \mathbf{0}.
\end{equation*}

The non-singularity of the modified normal matrix implies that
\begin{align*}
	\mathbb{E}\left\{\hat{\boldsymbol\varphi}_f(t_k,\bar{\boldsymbol\theta}) 
	\left(y_f(t_k,\bar{\boldsymbol\theta}) - \boldsymbol\varphi^\top_f(t_k,\bar{\boldsymbol\theta})\bar{\boldsymbol\theta}\right)\right\} 
	&= \mathbf{0}	\\
	\mathbb{E}\left\{\hat{\boldsymbol\varphi}_f(t_k,\bar{\boldsymbol\theta}) 
	\left(\frac{1}{\bar{A}(p)}y(t_k) + \frac{\bar{a}_1p^n+\cdots+\bar{a}_np}{\bar{A}(p)}y(t_k) 
	-\frac{\bar{B}(p)}{\bar{A}(p)}u(t_k) \right)\right\} 
	&= \mathbf{0}\\
	\mathbb{E}\left\{\hat{\boldsymbol\varphi}_f(t_k,\bar{\boldsymbol\theta}) 
	\left(y(t_k) -\frac{\bar{B}(p)}{\bar{A}(p)}u(t_k) \right)\right\} 
	&= \mathbf{0}\\
	\mathbb{E}\left\{\hat{\boldsymbol\varphi}_f(t_k,\bar{\boldsymbol\theta}) 
	\left(\left\{\frac{B^*(p)}{A^*(p)}u(t)\right\}_{t=t_k} -\frac{\bar{B}(p)}{\bar{A}(p)}u(t_k) \right)\right\} 
	+ \mathbb{E}\left\{\hat{\boldsymbol\varphi}_f(t_k,\bar{\boldsymbol\theta}) v(t_k)\right\} 
	&= \mathbf{0}.
\end{align*}

Since the additive noise on the output, $v(t_k)$, is independent of the input under Assumption~\ref{A2}, 
by following the same procedure as the proof of Statement~3 in~\cite{Pan2020},
\begin{equation*}
\mathbb{E}\left\{\hat{\boldsymbol\varphi}_f(t_k,\bar{\boldsymbol\theta}) v(t_k)\right\} = \mathbf{0}.
\end{equation*}

Hence, the converging point of the CLSRIVC estimator must satisfy
\begin{align}	\label{eq:epsilon}
	\mathbb{E}\left\{\hat{\boldsymbol\varphi}_f(t_k,\bar{\boldsymbol\theta}) 
	\left(\left\{\frac{B^*(p)}{A^*(p)}u(t)\right\}_{t=t_k} -\frac{\bar{B}(p)}{\bar{A}(p)}u(t_k) \right)\right\} 
	&= \mathbf{0}\\
	\mathbb{E}\left\{\hat{\boldsymbol\varphi}_f(t_k,\bar{\boldsymbol\theta}) 
	\underbrace{\left(\left\{\frac{B^*(p)}{A^*(p)(1+G^*(p)C(p))}r(t)\right\}_{t=t_k} 
	-\frac{\bar{B}(p)}{\bar{A}(p)}\left\{\frac{1}{1+G^*(p)C(p)}r(t)\right\}_{t=t_k} \right)}_{=\varepsilon(t_k,\bar{\boldsymbol\theta})}\right\} 
	&= \mathbf{0}.
\end{align}

Now, introduce an input-dependent term
\begin{equation*}
	\varepsilon_u(t_k) = \left\{\frac{\bar{B}(p)}{\bar{A}(p)(1+G^*(p)C(p))}r(t)\right\}_{t=t_k}
	-\frac{\bar{B}(p)}{\bar{A}(p)}\left\{\frac{1}{1+G^*(p)C(p)}r(t)\right\}_{t=t_k}.
\end{equation*}
Then, $\varepsilon(t_k,\bar{\boldsymbol\theta})$ in~\eqref{eq:epsilon} becomes
\begin{align*}
	\varepsilon(t_k,\bar{\boldsymbol\theta}) &= \left\{\frac{B^*(p)}{A^*(p)(1+G^*(p)C(p))}r(t)\right\}_{t=t_k} 
	-\left\{\frac{\bar{B}(p)}{\bar{A}(p)(1+G^*(p)C(p))}r(t)\right\}_{t=t_k} + \varepsilon_u(t_k)\\
	&= \frac{B^*(p)}{A^*(p)(1+G^*(p)C(p))}r(t_k)
	-\frac{\bar{B}(p)}{\bar{A}(p)(1+G^*(p)C(p))}r(t_k) + \varepsilon_u(t_k)	\\
	&= \frac{B^*(p)\bar{A}(p)-\bar{B}(p)A^*(p)}{A^*(p)\bar{A}(p)}\frac{1}{1+G^*(p)C(p)}r(t_k) + \varepsilon_u(t_k)\\
	&= \frac{H(p)}{A^*(p)\bar{A}(p)}\frac{A^*(p)L(p)}{Q^*(p)}r(t_k) + \varepsilon_u(t_k)\\
	&= \frac{L(p)}{\bar{A}(p)Q^*(p)}\mathbf{r}_d^\top(t_k)\mathbf{h} + \varepsilon_u(t_k),
\end{align*}
where $H(p)=B^*(p)\bar{A}(p)-\bar{B}(p)A^*(p)$ and $\mathbf{h}$ is an $(n+m+1)$-th vector containing the 
coefficients of $H(p)$.
	
Recall that the filtered instrument vector is given by~\eqref{eq:phi_hat}. Hence, ~\eqref{eq:epsilon} can be written as
\begin{equation*}
	\mathbf{S}(-\bar{B},\bar{A})\underbrace{\mathbb{E}\left\{\frac{L(p)}{\bar{A}(p)\bar{Q}(p)}\mathbf{r}_d(t_k)
	\frac{L(p)}{\bar{A}(p)\bar{Q}(p)}\mathbf{r}_d^\top(t_k)\right\}}_{=\bar{\boldsymbol\Phi}}\mathbf{h} 
	+ \mathbf{S}(-\bar{B},\bar{A})\underbrace{\mathbb{E}\left\{\frac{L(p)}{\bar{A}(p)\bar{Q}(p)}\mathbf{r}_d(t_k)
	\varepsilon_u(t_k)\right\}}_{=\bar{\boldsymbol\Psi}} = \mathbf{0}.
\end{equation*}
By using the same reasoning as showing the non-singularity of $\boldsymbol\Phi$ in~\eqref{eq:Phi}, we can show that 
$\bar{\boldsymbol\Phi}$ is generically non-singular. 
Note that $\bar{\boldsymbol\Psi}$ is in general not equal to zero due to $\varepsilon_u(t_k) \neq 0$ and is input-dependent.
Since the Sylvester matrix, $\mathbf{S}(-\bar{B},\bar{A})$, is non-singular under Assumption~\ref{A4}, we have
\begin{equation*}
	\mathbf{h} = -\bar{\boldsymbol\Phi}^{-1}\bar{\boldsymbol\Psi} \neq \mathbf{0}.
\end{equation*}
This in turn implies that 
\begin{equation}	\label{eq:clsrivc_sol}
	\frac{\bar{B}(p)}{\bar{A}(p)} = \frac{B^*(p)}{A^*(p)} - \frac{H(p)}{\bar{A}(p)A^*(p)},
\end{equation}
i.e. the unique converging point does not correspond to the true parameter vector.

In addition, the local convergence of the CLSRIVC estimator to the converging point given by~\eqref{eq:clsrivc_sol} 
follows the same procedure as the proof of Statement~2 of Corollary~3 in~\cite{Pan2020}.
Together with $\bar{\boldsymbol\theta}\neq\boldsymbol\theta$, we can conclude that the CLSRIVC 
estimator is not consistent.

\end{proof}

The inconsistency in the CLSRIVC estimator is due the unknown intersample behaviour of the plant input $u(t)$. 
The CLSRIVC estimator can achieve consistency if $u(t)$ is exactly reconstructable. 
A possible modification to the closed-loop system is replacing the continuous-time controller in Figure~\ref{fig:closed-loop1} 
by a discrete-time controller followed by a ZOH or FOH block. 


\bibliographystyle{ieeetr}
\bibliography{library}

\end{document}